\documentclass[11pt]{article}
\usepackage[a4paper,margin=1in]{geometry}
\usepackage{amsmath,amssymb,amsthm}
\usepackage{booktabs}
\usepackage{enumitem}
\usepackage[pdfusetitle,hidelinks]{hyperref}
\usepackage{xspace}
\usepackage{tabularx}

\newcommand{\VADAR}{VA-DAR\xspace}
\newcommand{\DID}{\mathsf{DiscoveryID}}
\newcommand{\CID}{\mathsf{CID}}
\newcommand{\REV}{\mathsf{REV}}
\newcommand{\Commit}{\mathsf{commit}}
\newcommand{\SA}{\mathsf{SA}}
\newcommand{\Norm}{\mathsf{Norm}}
\newcommand{\Zeroize}{\mathsf{Zeroize}}
\newcommand{\concat}{\,\|\,}

\newtheorem{definition}{Definition}
\newtheorem{theorem}{Theorem}

\newtheorem{remark}{Remark}
\newtheorem{corollary}{Corollary}

\title{\textbf{VA-DAR: A PQC-Ready, Vendor-Agnostic Deterministic Artifact Resolution\\
for Serverless, Enumeration-Resistant Wallet Recovery}}
\author{Jian Sheng Wang\\
Yeah LLC\\
\texttt{jason@yeah.app}}
\date{March 3, 2026}

\begin{document}
\maketitle

\begin{abstract}
Serverless wallet recovery must balance portability, usability, and privacy.
Public registries enable decentralized lookup but naive identifier hashing
leaks membership through enumeration.
We present \VADAR, a keyed-discovery protocol for ACE-GF--based wallets
that use device-bound passkeys for day-to-day local unlock while supporting
cross-device recovery using only a user-provided identifier (e.g., email) and
a single recovery passphrase (the \emph{Primary Passphrase}).
As a discovery-and-recovery layer over ACE-GF, \VADAR inherits ACE-GF's
context-isolated, algorithm-agile derivation substrate, enabling non-disruptive
migration to post-quantum algorithms at the identity layer.

The design introduces a decentralized discovery-and-recovery layer that maps
a privacy-preserving discovery identifier to an immutable content identifier
of a backup sealed artifact stored on a decentralized storage network.
Concretely, a user derives passphrase-rooted key material with a memory-hard
KDF, domain-separates keys for artifact sealing and discovery indexing, and
publishes a registry record keyed by a passphrase-derived discovery identifier.

\VADAR provides:
(i)~practical cross-device recovery using only identifier and passphrase,
(ii)~computational resistance to public-directory enumeration,
(iii)~integrity of discovery mappings via owner authorization, and
(iv)~rollback/tamper detection via monotonic versioning and artifact
commitments.
We define three sealed artifact roles, two update-authorization options,
and three protocol flows (registration, recovery, update).
We formalize security goals via cryptographic games and prove, under
standard assumptions (memory-hard KDF hardness, keyed-function
pseudorandomness, signature unforgeability, and hash collision resistance),
that \VADAR meets these goals while remaining vendor-agnostic and
chain-agnostic.
End-to-end post-quantum deployment additionally requires a PQ-secure
instantiation of registry authorization (Option~B signatures).
\end{abstract}

\section{Introduction}
\label{sec:intro}

Consumer wallets traditionally rely on mnemonic seed phrases for recovery,
creating usability and custody failures.
Meanwhile, passkeys (hardware-backed credentials) provide strong local
authentication but are commonly coupled with vendor synchronization or
device-bound state, which can conflict with a strict ``no centralized
server'' requirement.

This paper presents a pragmatic \emph{hybrid} approach:
\begin{itemize}[nosep,leftmargin=1.5em]
  \item \textbf{Local security and convenience.}
    A device-bound Passkey PRF (hardware-gated) seals a local artifact for
    biometric unlock on a given device.
  \item \textbf{Cross-device recovery without centralized servers.}
    A backup artifact is sealed \emph{only} under a user-chosen recovery
    passphrase and stored in decentralized storage.
    A public registry maps a \emph{keyed} discovery identifier to the
    artifact's content identifier.
\end{itemize}

The key contribution is \VADAR: a discovery-and-recovery layer that avoids
public enumeration by requiring knowledge of the recovery passphrase to
compute the discovery identifier used for registry lookup.

\paragraph{Non-goals.}
\VADAR does not, by itself, revoke already-derived on-chain keys or
invalidate prior addresses.
Key revocation requires account migration or smart-account rotation
mechanisms and is outside the scope of this document.

\subsection*{Contributions}
\begin{itemize}[nosep,leftmargin=1.5em]
  \item A keyed discovery construction for public registries that resists
        identifier-only enumeration.
  \item A domain-separated key schedule that prevents cross-role key reuse
        between discovery, sealing, and registry authorization.
  \item Two update-authorization options (HMAC-based and signature-based)
        with concrete trade-off analysis.
  \item A versioned, commitment-bound registry update protocol with
        owner authorization and rollback resistance.
  \item Formal security games and theorem-level guarantees for enumeration
        resistance, mapping integrity, and rollback safety.
\end{itemize}
This work contributes a protocol composition and security model over
standard primitives; it does not claim a new underlying cryptographic
primitive.

\subsection*{Relation to Prior ACE-GF Work}
ACE-GF~\cite{wang-acegf-2025} is treated as a publicly available primitive
specification providing
sealed artifact handling and deterministic derivation interfaces.
This paper contributes the decentralized discovery-and-recovery layer,
keyed discovery identifier design, and registry integrity/rollback model.
It does not alter ACE-GF internals.

\subsection*{Positioning vs.\ Adjacent Approaches}
The most widely deployed recovery baseline remains mnemonic backup
(e.g., BIP-39)~\cite{bip39}, which is portable but operationally fragile for
mainstream users.
Passkeys/WebAuthn improve local authentication UX and phishing resistance, but
their core standard scope is authenticator-backed login, not by itself a
complete decentralized wallet recovery protocol~\cite{webauthn}.
Account-abstraction ecosystems (e.g., ERC-4337 style smart accounts) provide
powerful account-layer controls such as key rotation/recovery policies, but
they do not directly solve public-directory enumeration risk for identifier
lookup in serverless backup discovery~\cite{eip4337}.
VA-DAR is positioned as a complementary layer focused on passphrase-gated
discovery and commitment/version-bound backup resolution.

\section{System Model and Preliminaries}
\label{sec:model}

\subsection{Entities}
\begin{itemize}[nosep,leftmargin=1.5em]
  \item \textbf{Client Device}~$D_i$: runs the wallet application.
  \item \textbf{Passkey PRF}~$\mathrm{PRF}_{D_i}(\cdot)$: a device-bound
        pseudo-random function gated by local user presence/biometrics
        (e.g., via a platform PRF extension).
  \item \textbf{Decentralized Storage}: content-addressed storage such as
        Arweave~\cite{arweave} providing immutable content identifiers.
  \item \textbf{Public Registry}: a public, append-only or updatable
        mapping layer (e.g., L2 smart contract) storing discovery records.
  \item \textbf{Relayer} (optional): a convenience component that submits
        registry transactions on behalf of clients.
\end{itemize}

\subsection{Secrets and Artifacts}
\begin{itemize}[nosep,leftmargin=1.5em]
  \item $\REV$ (Root Entity Value): the root secret from which the wallet
        derives cryptographic identities and keys.
  \item \textbf{Primary Passphrase}~$P$: user-known recovery passphrase
        (portable secret).
  \item $\SA$ (Sealed Artifact): a binary artifact encrypting $\REV$ under
        a derived sealing key.
\end{itemize}

\subsection{Cryptographic Primitives}
We assume:
\begin{itemize}[nosep,leftmargin=1.5em]
  \item Memory-hard password KDF: $\mathsf{Argon2id}(\cdot)$~\cite{argon2}.
  \item KDF for domain separation:
        $\mathsf{HKDF}(\cdot)$~\cite{hkdf}.
  \item Message authentication:
        $\mathsf{HMAC}(\cdot)$~\cite{hmac}.
  \item Authenticated encryption:
        $\mathsf{AEAD.Enc}/\mathsf{AEAD.Dec}$ with associated
        data~\cite{aead-siv}.
  \item Digital signatures:
        $\mathsf{Sign}/\mathsf{Verify}$ (EUF-CMA secure)~\cite{euf-cma}.
  \item Collision-resistant hash function: $H(\cdot)$.
\end{itemize}

\subsection{Notation}
\begin{itemize}[nosep,leftmargin=1.5em]
  \item $\concat$ denotes concatenation.
  \item $\Norm(\mathtt{email})$ denotes canonical normalization of an
        identifier (lowercase, trimming whitespace, etc.).
  \item $\Zeroize(\cdot)$ denotes secure memory clearing.
\end{itemize}

\section{Threat Model and Security Goals}
\label{sec:threat}

\subsection{Adversary Capabilities}
The adversary~$\mathcal{A}$ is probabilistic polynomial-time (PPT) and can:
\begin{itemize}[nosep,leftmargin=1.5em]
  \item Observe the full public registry state and all on-chain data.
  \item Query the registry arbitrarily often with arbitrary identifiers.
  \item Attempt offline guessing attacks if it obtains a sealed artifact
        ($\SA2$) from decentralized storage.
  \item Attempt to front-run or overwrite registry records, subject to
        registry authorization rules.
  \item Submit competing transactions, replay stale updates, and tamper
        with storage payload delivery.
  \item Control relayers and network scheduling.
  \item Perform denial-of-service by spamming queries or refusing to
        store/pin content.
\end{itemize}
We do \emph{not} assume a trusted centralized server for rate-limiting,
identity verification, or content hosting.
The adversary does not know honest users' passphrases unless by
online/offline guessing, and cannot break underlying cryptographic
assumptions.
We also do not assume global uniqueness of a registry deployment:
an adversary can deploy a parallel registry with copied public metadata
(including $\mathtt{app\_id}$).
Security therefore requires clients to authenticate a canonical
$\mathtt{registry\_id}$ rather than relying on namespace secrecy.

\subsection{Security Goals}
\begin{description}[nosep,leftmargin=0em]
  \item[G1: Cross-device recoverability.]
    A legitimate user can recover $\REV$ on a new device using only
    $(I, P)$ and publicly available infrastructure.
  \item[G2: Enumeration resistance.]
    Observers should not be able to learn wallet membership solely by
    knowing an identifier such as an email; the registry should not be a
    public directory.
  \item[G3: Integrity of discovery mapping.]
    Unauthorized parties cannot update a victim's mapping to point to
    attacker-chosen content.
  \item[G4: Rollback resistance.]
    The system should resist reverting the mapping to older CIDs without
    explicit authorization.
  \item[G5: Minimization of device lock-in.]
    Device-bound PRF improves local security but must not be required for
    cross-device recovery.
\end{description}

\section{Design Overview: Three Sealed Artifacts}
\label{sec:overview}

\VADAR formalizes three artifact roles in the wallet lifecycle:

\begin{table}[h]
\centering
\begin{tabular}{@{}llp{6cm}@{}}
\toprule
Artifact & Sealing Factor & Purpose \\
\midrule
$\SA1$ (Local) & $\mathrm{PRF}_{D_1}$-derived &
  Daily unlock on device $D_1$ (biometrics). \\
$\SA2$ (Backup) & Primary Passphrase $P$ only &
  Cross-device recovery (portable). \\
$\SA3$ (New Local) & $\mathrm{PRF}_{D_2}$-derived &
  Daily unlock on new device $D_2$. \\
\bottomrule
\end{tabular}
\caption{Sealed artifact roles.}
\label{tab:artifacts}
\end{table}

The crucial property is that \textbf{$\SA2$ must be decryptable without any
device-bound PRF}, otherwise recovery would fail on device changes under
a ``no vendor sync'' constraint.
This separation directly supports goal~G5.

\section{Key Derivation with Strict Domain Separation}
\label{sec:keys}

\subsection{Two-Stage Passphrase Derivation}
To make lookup possible \emph{before} fetching $\SA2$, \VADAR uses two
separate derivations from the same passphrase~$P$.

\paragraph{Stage A (lookup/auth root; computable from $(I,P)$ only).}
Define a deterministic lookup salt:
\begin{equation}\label{eq:salt-lookup}
  \mathsf{salt}_{\mathrm{lookup}} \leftarrow
  H\!\bigl(\texttt{"va-dar:lookup:v1"} \concat \Norm(I)\bigr).
\end{equation}
Then derive:
\begin{equation}\label{eq:klookup}
  K_{\mathrm{lookup}} \leftarrow
  \mathsf{Argon2id}(P, \mathsf{salt}_{\mathrm{lookup}},
  \mathsf{params}_{\mathrm{lookup}}).
\end{equation}

\paragraph{Stage B (artifact decryption root; bound to artifact-local salt).}
After fetching $\SA2$, parse $\mathsf{salt}_{pw}, \mathsf{params}_{pw}$ from
the artifact and derive:
\begin{equation}\label{eq:ksealroot}
  K_{\mathrm{sealroot}} \leftarrow
  \mathsf{Argon2id}(P, \mathsf{salt}_{pw}, \mathsf{params}_{pw}).
\end{equation}

\subsection{Separate Keys for Sealing, Indexing, and Authorization}
To ensure safe reuse of a single passphrase~$P$ across protocol roles
(artifact sealing, discovery indexing, registry authorization), the client
derives independent keys:
\begin{align}
  K_{\mathrm{sa}}  &\leftarrow
    \mathsf{HKDF}(K_{\mathrm{sealroot}},
    \mathsf{info}=\texttt{"acegf:sa2:seal"}),
    \label{eq:ksa} \\
  K_{\mathrm{idx}} &\leftarrow
    \mathsf{HKDF}(K_{\mathrm{lookup}},
    \mathsf{info}=\texttt{"va\text{-}dar:discovery:index"}),
    \label{eq:kidx} \\
  K_{\mathrm{reg}} &\leftarrow
    \mathsf{HKDF}(K_{\mathrm{lookup}},
    \mathsf{info}=\texttt{"va\text{-}dar:registry:auth"}).
    \label{eq:kreg}
\end{align}
This domain separation is normative.
Implementations \textsc{must not} reuse a single key directly for both
AEAD sealing and registry indexing or authorization.

\section{Discovery Identifier Construction}
\label{sec:did}

\subsection{Keyed Discovery Identifier}
Let $I$ be a user-provided identifier such as an email address.
Let $\mathsf{ctx}_{\mathrm{did}}$ be public domain-separation context
(e.g., application id, registry id, protocol version).
Define:
\begin{equation}\label{eq:did}
  \DID \leftarrow
  \mathsf{HMAC}(K_{\mathrm{idx}},
  \mathsf{ctx}_{\mathrm{did}} \concat \Norm(I)).
\end{equation}

\paragraph{Rationale.}
Unlike $H(I \concat \mathsf{global\_salt})$, the keyed construction
prevents third parties from computing $\DID$ without knowledge of~$P$.
This directly supports enumeration resistance~(G2).

\subsection{Length and Encoding}
$\DID$ is a binary string of the HMAC output length.
For registry usage, it may be encoded as hex or base64url.
The normalization function~$\Norm$ and context
$\mathsf{ctx}_{\mathrm{did}}$ must be deterministic and specified;
inconsistent settings cause account fragmentation.
Implementations may fix a structure for $\mathsf{ctx}_{\mathrm{did}}$,
e.g., $\mathtt{app\_id} \concat \mathtt{registry\_id} \concat
\mathtt{version}$.
Normatively, $\mathtt{registry\_id}$ should include at least
$(\mathtt{chain\_id}, \mathtt{contract\_address})$ for on-chain registries.
The $\mathtt{app\_id}$ is public and \emph{not} an authenticity secret;
authenticity is anchored by client-verified canonical
$\mathtt{registry\_id}$.

\section{Sealed Artifact Formats}
\label{sec:formats}

\subsection{Backup Artifact \texorpdfstring{$\SA2$}{SA2} (Password-Sealed)}
$\SA2$ must include all parameters needed for decryption and verification,
except the passphrase.
A minimal format is:

\begin{table}[h]
\centering
\begin{tabular}{@{}lp{8cm}@{}}
\toprule
Field & Description \\
\midrule
\texttt{version}    & Artifact format version \\
\texttt{salt\_pw}   & Salt for $\mathsf{Argon2id}$ \\
\texttt{params\_pw} & $\mathsf{Argon2id}$ parameters $(m,t,p)$ \\
\texttt{aead\_alg}  & AEAD algorithm identifier \\
\texttt{nonce}      & AEAD nonce \\
\texttt{aad}        & Associated data (context, metadata) \\
\texttt{ct}         & $\mathsf{AEAD.Enc}(K_{\mathrm{sa}}, \mathsf{nonce},
                       \REV, \mathsf{aad})$ (nonce from row above) \\
\bottomrule
\end{tabular}
\caption{$\SA2$ record format.}
\label{tab:sa2}
\end{table}

\subsection{Local Artifacts \texorpdfstring{$\SA1$/$\SA3$}{SA1/SA3} (PRF-Sealed)}
Local artifacts are sealed under a device-bound PRF-derived secret:
\begin{equation}\label{eq:klocal}
  K_{\mathrm{local}} \leftarrow
    \mathsf{HKDF}\bigl(\mathrm{PRF}_{D_i}(\mathsf{label}),\,
    \mathsf{info}=\texttt{"acegf:local:seal"}\bigr).
\end{equation}
The exact label is implementation-defined but should be stable per-app and
per-user on the device.
Local artifacts \textsc{must not} be required for cross-device
recovery~(G5).

\section{Registry Record and Update Authorization}
\label{sec:registry}

\subsection{Registry Record Structure}
The public registry stores a mapping from $\DID$ to a record:

\begin{table}[h]
\centering
\begin{tabular}{@{}lp{8cm}@{}}
\toprule
Field & Description \\
\midrule
\texttt{cid}    & Content identifier of $\SA2$ in decentralized storage \\
\texttt{ver}    & Monotonic version counter \\
\texttt{commit} & Commitment to artifact, e.g., $H(\SA2)$ \\
\texttt{pk\_owner} & Owner verification key bound at first registration \\
\texttt{auth}   & Update authorization proof (see below) \\
\bottomrule
\end{tabular}
\caption{Registry record structure.}
\label{tab:record}
\end{table}

\subsection{Update Authorization Options}
\label{sec:authopt}
Because no centralized server exists, update authorization must be
cryptographically verifiable.

\paragraph{Option~A: Passphrase-derived HMAC authorization.}
Derive an update secret key $K_{\mathrm{reg}}$ per Equation~\eqref{eq:kreg},
then define:
\begin{equation}\label{eq:auth-hmac}
  \mathsf{auth} \leftarrow
    \mathsf{HMAC}\bigl(K_{\mathrm{reg}},\,
    \DID \concat \mathtt{cid} \concat \mathtt{ver} \concat
    \Commit\bigr).
\end{equation}
Validation requires recomputing the HMAC with $K_{\mathrm{reg}}$, i.e.,
knowledge of $K_{\mathrm{reg}}$, which is secret; thus pure on-chain
validation is non-trivial unless implemented via a commit-reveal scheme,
ZK proof, or off-chain verification.
Option~A is best suited for registries that can support privacy-preserving
verification (e.g., ZK validity proofs) or off-chain verifiers.

\paragraph{Option~B: Owner public key stored in registry.}
Store an owner public key~$\mathsf{pk}_{\mathrm{owner}}$ in the record on
first registration, and enforce that first-write key binding is atomic
(initialize-if-empty semantics on $(\DID, \mathsf{pk}_{\mathrm{owner}})$).
Updates must be signed:
\begin{equation}\label{eq:auth-sig}
  \sigma \leftarrow
    \mathsf{Sign}\bigl(\mathsf{sk}_{\mathrm{owner}},\,
    \DID \concat \mathtt{cid} \concat \mathtt{ver} \concat
    \Commit\bigr).
\end{equation}
Verification is straightforward on-chain.
The drawback is owner-key lifecycle management.
Two deployment models are common:
(i) a randomly generated owner key protected by secure backup, or
(ii) deterministic derivation from passphrase-derived material
(higher UX, but security then additionally depends on passphrase-guessing
cost/entropy).

\begin{remark}
For immediate deployability, Option~B is typically easier to implement on
an L2 smart contract.
In this paper, Option~B is the \emph{primary deployable instantiation}.
Option~A is analyzed as an extension that additionally assumes a verifier
model capable of validating HMAC-based authorization without exposing secret
material.
\end{remark}

\subsection{Rollback Resistance}
\label{sec:rollback-mech}
The registry must enforce $\mathtt{ver}$ as monotonically increasing per
$\DID$ (e.g., only accept updates where
$\mathtt{ver}_{\mathrm{new}} > \mathtt{ver}_{\mathrm{old}}$).
This resists rollback to an older CID absent key compromise.

\section{Protocol Flows}
\label{sec:flows}

\subsection{Flow~1: Initial Registration (Device \texorpdfstring{$D_1$}{D1})}
\label{sec:flow-reg}

\begin{enumerate}[nosep,leftmargin=2em]
  \item Device $D_1$ holds $\REV$ and local $\SA1$ sealed under
        $\mathrm{PRF}_{D_1}$.
  \item User chooses Primary Passphrase~$P$.
  \item $D_1$ unlocks $\SA1$ via passkey biometrics, reconstructs $\REV$
        in memory.
  \item $D_1$ obtains $\mathsf{salt}_{\mathrm{lookup}}$ from $I$ via
        Eq.~\eqref{eq:salt-lookup}, then derives $K_{\mathrm{lookup}}$ from
        $(I,P)$ (Eq.~\eqref{eq:klookup}) and derives
        $K_{\mathrm{idx}}, K_{\mathrm{reg}}$.
  \item $D_1$ samples fresh $\mathsf{salt}_{pw}$ and computes
        $K_{\mathrm{sealroot}}$ (Eq.~\eqref{eq:ksealroot}), then
        derives $K_{\mathrm{sa}}$.
  \item $D_1$ constructs $\SA2$ by sealing $\REV$ under $K_{\mathrm{sa}}$.
  \item $D_1$ uploads $\SA2$ to decentralized storage, obtaining $\CID$.
  \item $D_1$ computes
        $\DID = \mathsf{HMAC}(K_{\mathrm{idx}},
        \mathsf{ctx}_{\mathrm{did}} \concat \Norm(I))$.
  \item $D_1$ writes registry record:
        $\DID \mapsto \{\mathtt{cid}, \mathtt{ver}=1,
        \Commit, \mathsf{pk}_{\mathrm{owner}}, \mathsf{auth}\}$,
        atomically binding $\mathsf{pk}_{\mathrm{owner}}$ on first write.
  \item $D_1$ calls $\Zeroize$ on $\REV$ and all derived keys in memory.
\end{enumerate}

\subsection{Flow~2: Recovery (Device \texorpdfstring{$D_2$}{D2})}
\label{sec:flow-recover}

\begin{enumerate}[nosep,leftmargin=2em]
  \item User installs wallet on $D_2$, inputs identifier $I$ and
        Primary Passphrase~$P$.
  \item $D_2$ obtains $\mathsf{salt}_{\mathrm{lookup}}$ from $I$ via
        Eq.~\eqref{eq:salt-lookup}, then derives $K_{\mathrm{lookup}}$ from
        $(I,P)$ (Eq.~\eqref{eq:klookup}) and derives $K_{\mathrm{idx}}$.
  \item $D_2$ computes
        $\DID = \mathsf{HMAC}(K_{\mathrm{idx}},
        \mathsf{ctx}_{\mathrm{did}} \concat \Norm(I))$.
  \item $D_2$ queries registry for $\DID$, obtains $\CID$ and record
        metadata.
  \item $D_2$ fetches $\SA2$ from decentralized storage using $\CID$.
  \item $D_2$ verifies $H(\SA2) = \Commit$.
  \item $D_2$ parses $(\mathsf{salt}_{pw}, \mathsf{params}_{pw})$ from
        $\SA2$, derives $K_{\mathrm{sealroot}}$ and then $K_{\mathrm{sa}}$.
  \item $D_2$ decrypts $\SA2$ using $K_{\mathrm{sa}}$, reconstructs
        $\REV$ in memory.
  \item $D_2$ enables passkey and seals a new local $\SA3$ under
        $\mathrm{PRF}_{D_2}$.
  \item $D_2$ derives chain-specific keys and addresses from $\REV$ as
        needed.
  \item $D_2$ calls $\Zeroize$ on $\REV$ after key derivations complete.
\end{enumerate}

\subsection{Flow~3: Backup Update / Discovery Revocation}
\label{sec:flow-update}

This flow updates the registry mapping to a new CID (e.g., after suspected
exposure, password rotation, or periodic hygiene).
It provides \emph{discovery revocation} (directory update), not
cryptographic invalidation of previously derived on-chain keys.

\begin{enumerate}[nosep,leftmargin=2em]
  \item User on an authorized device unlocks local artifact ($\SA1$ or
        $\SA3$) via passkey and reconstructs $\REV$ in memory.
  \item User selects target passphrase $P'$ (either unchanged $P'=P$ or
        rotated $P'\neq P$).
  \item Construct new backup artifact $\SA2'$ (fresh
        $\mathsf{salt}_{pw}'$) sealed under~$P'$, upload, obtaining $\CID'$.
  \item \textbf{Case 1: passphrase unchanged} ($P' = P$).
        $\DID$ remains unchanged; submit authorized update on the same key
        with $\mathtt{ver}\leftarrow\mathtt{ver}+1$.
  \item \textbf{Case 2: passphrase rotated} ($P' \neq P$).
        Compute $\DID'$ from $(I,P')$ and create a \emph{new} registry
        entry at key $\DID'$ (its own version sequence, starting at~1).
  \item In Case 2, mark prior $\DID$ entry as tombstoned (or set an
        optional redirect pointer to $\DID'$) with explicit migration
        metadata and policy-defined grace period.
  \item $\Zeroize$ $\REV$ from memory.
\end{enumerate}

\begin{remark}
Version monotonicity is enforced \emph{per registry key}.
When $P$ changes, $\DID$ changes, so continuity must be modeled as a key
migration workflow (old-key tombstone/redirect), not as an in-place version
increment on the old key.
\end{remark}

\section{Formal Security Definitions}
\label{sec:security}

We formalize the security goals from Section~\ref{sec:threat} as
cryptographic games.
In all games the adversary~$\mathcal{A}$ is PPT.

\begin{definition}[Enumeration-Resistance Game $\mathbf{G}_{\mathrm{enum}}$]
\label{def:enum}
The challenger samples a bit $b \gets \{0,1\}$.
If $b=1$, the challenger registers a target identifier $I^*$ with a
passphrase sampled from a distribution with minimum entropy
$H_\infty(P)\ge \mu$, producing
$\DID^* = \mathsf{HMAC}(K_{\mathrm{idx}}^*,\,
\mathsf{ctx}_{\mathrm{did}} \concat \Norm(I^*))$.
If $b=0$, $\DID^*$ is drawn uniformly at random from the output space.
$\mathcal{A}$ is given public registry access (including $\DID^*$ if $b=1$)
and a candidate identifier set, but \emph{not} the passphrase.
$\mathcal{A}$ outputs a guess $b'$.
$\mathcal{A}$ wins if $b' = b$ with advantage exceeding the brute-force
baseline for memory-hard passphrase guessing.
\end{definition}

\begin{definition}[Mapping-Integrity Game $\mathbf{G}_{\mathrm{map}}$]
\label{def:map}
The challenger registers a record for $\DID^*$ with authorization
material~$\mathsf{auth}^*$ (signature under
$\mathsf{pk}_{\mathrm{owner}}^*$ in the primary model).
The game includes first-registration security: if $\DID^*$ is initially
empty, only an authorized initialize operation that atomically binds
$\mathsf{pk}_{\mathrm{owner}}^*$ is accepted.
The registry is assumed to provide atomic conditional writes
(initialize-if-empty) with standard ledger ordering guarantees.
$\mathcal{A}$ is given full registry access and may issue arbitrary
create/update/delete requests for any $\DID \neq \DID^*$.
$\mathcal{A}$ wins if it causes the registry to accept a state-changing
operation on $\DID^*$ (create overwrite, update, or delete) without
producing valid authorization under the victim's key material.
\end{definition}

\begin{definition}[Rollback-Safety Game $\mathbf{G}_{\mathrm{roll}}$]
\label{def:roll}
The challenger registers and updates a record for $\DID^*$ through
versions $1, 2, \ldots, n$ with corresponding CIDs
$c_1, c_2, \ldots, c_n$ and commitments $h_1, \ldots, h_n$.
$\mathcal{A}$ controls the network and storage layer.
$\mathcal{A}$ may delay or censor messages, but the client is assumed to
query a finalized/fresh registry view (standard ledger finality assumption).
$\mathcal{A}$ wins if a client performing recovery on version~$n$
accepts an artifact corresponding to version $j < n$ (stale CID replay)
or accepts an artifact $\SA2'$ such that $H(\SA2') \neq h_n$
(tampering), without the client detecting the discrepancy.
\end{definition}

\section{Main Results}
\label{sec:results}

\begin{theorem}[Enumeration Resistance]
\label{thm:enum}
Assume $\mathsf{Argon2id}$ is one-way at configured cost
$(\mathsf{params}_{\mathrm{lookup}})$ and
$\mathsf{HMAC}(K_{\mathrm{idx}}, \cdot)$ is a PRF when keyed by
$K_{\mathrm{idx}}$ derived from $K_{\mathrm{lookup}}$ via $\mathsf{HKDF}$.
Then for any PPT adversary~$\mathcal{A}$ playing
$\mathbf{G}_{\mathrm{enum}}$,
\[
  \mathrm{Adv}^{\mathrm{enum}}_{\mathcal{A}}
  \;\leq\;
  \mathrm{Adv}^{\mathrm{ow}}_{\mathsf{Argon2id}}
  \;+\;
  \mathrm{Adv}^{\mathrm{prf}}_{\mathsf{HMAC}}
  \;+\; \mathsf{negl}(\lambda).
\]
That is, identifier-only enumeration reduces to passphrase guessing with
per-guess memory-hard cost and passphrase entropy parameter~$\mu$.
\end{theorem}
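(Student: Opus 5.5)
The argument is a standard game-hopping proof that starts from $\mathbf{G}_{\mathrm{enum}}$ and ends in a game where $\DID^*$ is uniform in both branches. In that final game $\mathcal{A}$'s view is independent of $b$. We write $\mathbf{G}_0$ for the real game with $b=1$, where $\DID^* = \mathsf{HMAC}(K_{\mathrm{idx}}^*, \mathsf{ctx}_{\mathrm{did}} \concat \Norm(I^*))$ and $K_{\mathrm{idx}}^* = \mathsf{HKDF}(K_{\mathrm{lookup}}^*, \texttt{"va-dar:discovery:index"})$. The $b=0$ branch is already the ideal distribution, so the advantage is the distinguishing gap between $\mathbf{G}_0$ and a game $\mathbf{G}_3$ in which $\DID^*$ is uniform.

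\emph{Hop~1} ($\mathbf{G}_0\to\mathbf{G}_1$) replaces $K_{\mathrm{lookup}}^*$ by a uniformly random key of the same length. The justification is that $\mathcal{A}$ sees only public values: $\mathsf{salt}_{\mathrm{lookup}}$ is computable from $I^*$, and the registry contains $\DID^*$ but no other function of $K_{\mathrm{lookup}}^*$. Any distinguisher $\mathcal{D}$ between the two games must therefore exploit $K_{\mathrm{lookup}}^*$ as a function of $P$. We turn $\mathcal{D}$ into an Argon2id inverter $\mathcal{B}$. $\mathcal{B}$ runs $\mathcal{D}$ and records every Argon2id query $\mathcal{D}$ makes on $(P', \mathsf{salt}_{\mathrm{lookup}})$ with the correct salt, treating Argon2id as a memory-hard oracle. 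It then checks whether some query output matches the challenge.

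If no query hits $P$, the real and random keys are indistinguishable. The probability that some query hits $P$ is at most $q\cdot 2^{-\mu}$, where $q$ is the number of Argon2id evaluations $\mathcal{A}$ can afford at cost $\mathsf{params}_{\mathrm{lookup}}$. This term is what we bill to $\mathrm{Adv}^{\mathrm{ow}}_{\mathsf{Argon2id}}$, and it is exactly the ``brute-force baseline'' allowed in Definition~\ref{def:enum}.

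\emph{Hop~2} ($\mathbf{G}_1\to\mathbf{G}_2$) replaces $K_{\mathrm{idx}}^*$ by a uniformly random key. This relies on HKDF acting as a PRF (extractor) on a uniform input key. The loss is absorbed into $\mathsf{negl}(\lambda)$, or equivalently treated as the hypothesis in the theorem that the HKDF-derived key is a valid PRF key.

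\emph{Hop~3} ($\mathbf{G}_2\to\mathbf{G}_3$) replaces $\mathsf{HMAC}(K_{\mathrm{idx}}^*,\cdot)$ by a truly random function. The reduction to the PRF adversary evaluates the oracle only on the single point $\mathsf{ctx}_{\mathrm{did}}\concat\Norm(I^*)$ and hands the result to $\mathcal{A}$. This costs $\mathrm{Adv}^{\mathrm{prf}}_{\mathsf{HMAC}}$. In $\mathbf{G}_3$, $\DID^*$ is uniform, which matches the $b=0$ distribution, so $\Pr[b'=b]=1/2$. Summing the hops gives the stated bound.

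The main obstacle is Hop~1. Argon2id is a deterministic function of a low-entropy passphrase, so it is not a pseudorandom generator in the usual sense. Plain one-wayness does not directly yield indistinguishability of $K_{\mathrm{lookup}}^*$ from random. I would handle this by modelling Argon2id as a random oracle with per-query memory-hard cost. The real/random gap is then bounded by the event that $\mathcal{A}$ queries the oracle on $(P,\mathsf{salt}_{\mathrm{lookup}})$. That event is itself a successful passphrase recovery, i.e.\ an inversion, and it is what we charge to $\mathrm{Adv}^{\mathrm{ow}}_{\mathsf{Argon2id}}\le q\,2^{-\mu}$.

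The candidate identifier set poses no further difficulty. Salting with $\Norm(I^*)$ only ensures that guesses are per-identifier, and $\mathsf{ctx}_{\mathrm{did}}$ binds $\DID^*$ to the particular registry.
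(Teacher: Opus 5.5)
Your proposal has the same skeleton as the paper's sketch and arrives at the same two-term bound. Both hide $K_{\mathrm{lookup}}$ behind Argon2id, pass to $K_{\mathrm{idx}}$, and then apply HMAC pseudorandomness at the single point $\mathsf{ctx}_{\mathrm{did}}\concat\Norm(I^*)$. Where you differ is in how you justify the first step.

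The paper argues that without $P$ the adversary ``cannot derive'' $K_{\mathrm{lookup}}$, and hence cannot derive $K_{\mathrm{idx}}$, and then invokes the PRF property. This quietly treats an unpredictable key as if it were a uniform PRF key. One-wayness of a deterministic function of a low-entropy $P$ does not provide that. The paper effectively absorbs the issue into its hypothesis that HMAC is a PRF ``when keyed by $K_{\mathrm{idx}}$ derived from $K_{\mathrm{lookup}}$ via HKDF.''

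You name this gap and close it by modelling Argon2id as a random oracle. You then use an identical-until-bad argument on the query $(P,\mathsf{salt}_{\mathrm{lookup}})$, followed by a separate HKDF hop.
\begin{itemize}
  \item \emph{Gain:} the reduction is sound, and the $\mathrm{Adv}^{\mathrm{ow}}_{\mathsf{Argon2id}}$ term becomes the concrete brute-force term $q\cdot 2^{-\mu}$. That is what the theorem's informal gloss about per-guess memory-hard cost actually means.
  \item \emph{Cost:} you prove the bound in an idealized model rather than from plain one-wayness. State the random-oracle assumption as part of the theorem rather than as a fallback.
\end{itemize}

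One minor correction: your claim that the registry contains no function of $K_{\mathrm{lookup}}^*$ other than $\DID^*$ is not true of a full record. Under Option~A, $\mathsf{auth}$ is keyed by $K_{\mathrm{reg}}$, which is also derived from $K_{\mathrm{lookup}}$. Hop~1 does not need the claim, though. In the random-oracle argument such values are harmless until the bad query, provided Hop~2 treats HKDF as yielding independent $K_{\mathrm{idx}}$ and $K_{\mathrm{reg}}$.
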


\begin{proof}[Proof sketch]
Without $P$, the adversary cannot derive $K_{\mathrm{lookup}}$ and
hence cannot compute $K_{\mathrm{idx}}$.
Without $K_{\mathrm{idx}}$, the output
$\DID = \mathsf{HMAC}(K_{\mathrm{idx}},
\mathsf{ctx}_{\mathrm{did}} \concat \Norm(I^*))$
is computationally indistinguishable from a uniform random string
by the PRF property of HMAC.
Thus $\mathcal{A}$ cannot distinguish $b=1$ from $b=0$ except by
guessing~$P$, where each guess requires evaluating the full
$\mathsf{Argon2id}$ derivation pipeline at configured memory-hard cost.
The advantage decomposes into the one-wayness advantage of the
memory-hard KDF plus the PRF distinguishing advantage of HMAC, both
negligible under the stated assumptions.
\end{proof}

\begin{theorem}[Mapping Integrity (Primary Instantiation: Option~B)]
\label{thm:map}
Let the registry enforce signature verification under
$\mathsf{pk}_{\mathrm{owner}}$ with an EUF-CMA secure scheme, and enforce
initialize-if-empty semantics that atomically bind
$\mathsf{pk}_{\mathrm{owner}}$ on first registration of each $\DID$.
For owner-key provisioning, consider:
\begin{itemize}[nosep]
  \item[(R)] \emph{Random-key model:} $\mathsf{sk}_{\mathrm{owner}}$ is generated uniformly and stored securely.
  \item[(D)] \emph{Passphrase-derived model:} $\mathsf{sk}_{\mathrm{owner}}$ is deterministically derived from passphrase-derived key material.
\end{itemize}
Then for any PPT adversary~$\mathcal{A}$ playing
$\mathbf{G}_{\mathrm{map}}$,
\[
  \mathrm{Adv}^{\mathrm{map}}_{\mathcal{A}}
  \;\leq\;
  \begin{cases}
    \mathrm{Adv}^{\mathrm{euf\text{-}cma}}_{\mathsf{Sig}} + \mathsf{negl}(\lambda)
    & \text{(R)}, \\
    \mathrm{Adv}^{\mathrm{euf\text{-}cma}}_{\mathsf{Sig}}
    + \mathrm{Adv}^{\mathrm{ow}}_{\mathsf{Argon2id}}
    + \mathsf{negl}(\lambda)
    & \text{(D)}.
  \end{cases}
\]
Unauthorized mapping changes are infeasible for PPT adversaries.
\end{theorem}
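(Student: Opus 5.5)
The plan is a short sequence of game hops over $\mathbf{G}_{\mathrm{map}}$, splitting the adversary's winning event by the kind of state-changing operation it manages to get accepted on $\DID^*$. Since the registry enforces initialize-if-empty semantics and the challenger's initialize binding $\mathsf{pk}_{\mathrm{owner}}^*$ is accepted first, there are two cases. In the first, $\mathcal{A}$ gets an initialize or create-overwrite on $\DID^*$ accepted. In the second, $\mathcal{A}$ gets an update or delete on $\DID^*$ accepted while $\mathsf{pk}_{\mathrm{owner}}^*$ remains bound. The first case contradicts the assumed atomic conditional-write and ledger-ordering guarantees, which are part of the game's hypotheses. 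Its probability is therefore $0$, or absorbed into $\mathsf{negl}(\lambda)$ if one allows a negligible failure of the ledger abstraction.

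For the second case I will build a reduction $\mathcal{B}$ against EUF-CMA in the random-key model (R). $\mathcal{B}$ receives $\mathsf{pk}$ from its challenger and installs it as $\mathsf{pk}_{\mathrm{owner}}^*$. It simulates the honest owner's signed updates $\DID^*\concat\mathtt{cid}\concat\mathtt{ver}\concat\Commit$ through its signing oracle. It simulates every other registry entry itself, which it can do because those entries use independent keys. When $\mathcal{A}$ gets an operation on $\DID^*$ accepted, the registry's verification guarantees a valid signature on the submitted message. That message must be fresh: each accepted update must carry $\mathtt{ver}_{\mathrm{new}}>\mathtt{ver}_{\mathrm{old}}$, and the oracle-signed messages are exactly the honest ones already consumed. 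An accepted replay of an oracle-signed message would instead reproduce an honest state change, which by definition is not a win. So $\mathcal{B}$ outputs a forgery, giving $\mathrm{Adv}^{\mathrm{euf\text{-}cma}}_{\mathsf{Sig}}$. A small point I must fix is that delete operations need a signed message format too. I will assume it is domain-separated (for example a tombstone flag) so that a delete cannot reuse a signed update.

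For the passphrase-derived model (D), I will add one preliminary hop before the EUF-CMA reduction. In it, $\mathsf{sk}_{\mathrm{owner}}$ is replaced by a key generated by a genuine $\mathsf{KeyGen}$ from uniform coins. The two games differ only if $\mathcal{A}$ can recover the passphrase-derived material ($K_{\mathrm{lookup}}$, and hence $K_{\mathrm{reg}}$ or whatever feeds the key derivation) from public data, namely $\DID^*$ and $\mathsf{pk}_{\mathrm{owner}}^*$. Either of these can be used to test passphrase guesses. I will bound this gap by $\mathrm{Adv}^{\mathrm{ow}}_{\mathsf{Argon2id}}$, using HKDF's pseudorandomness to argue that the outputs are independent of one another once $K_{\mathrm{lookup}}$ is hidden. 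The residual HKDF/PRF terms go into $\mathsf{negl}(\lambda)$. After this hop, the reasoning from (R) applies verbatim.

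The main obstacle is making this (D) hop rigorous. Argon2id one-wayness is really a statement about guessing a passphrase of min-entropy $\mu$, so the honest bound is roughly $q/2^{\mu}$ for $q$ guesses, not a negligible quantity. I would state explicitly that $\mathrm{Adv}^{\mathrm{ow}}_{\mathsf{Argon2id}}$ denotes this guessing advantage at the configured cost. I would also argue that every useful offline test requires a full Argon2id evaluation, because $\mathsf{pk}_{\mathrm{owner}}^*$ and $\DID^*$ depend on $P$ only through $K_{\mathrm{lookup}}$. The first-write race, where $\mathcal{A}$ front-runs the initialize, is the other delicate point. I would discharge it by observing that front-running requires computing $\DID^*$ in advance, which again reduces to the same guessing term.
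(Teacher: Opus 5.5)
Your proposal follows the paper's own argument: a direct EUF-CMA reduction in model (R), plus an additive passphrase-guessing term in model (D). You work it out with more care, adding signing-oracle simulation, version-monotone freshness, domain-separated deletes, and an honest reading of the Argon2id term as roughly $q/2^{\mu}$ rather than negligible, which corrects an overclaim in the paper's sketch.

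One aside is wrong. Front-running the first write does not require computing $\DID^*$ in advance: an adversary who controls relayers and scheduling can read $\DID^*$ from the honest initialize transaction and bind its own key first, with no guessing at all. So that case cannot be charged to the guessing term. As in the paper, it rests entirely on the first-registration hypothesis built into $\mathbf{G}_{\mathrm{map}}$, which your case split already assumes.
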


\begin{proof}[Proof sketch]
The adversary must produce a valid signature under
$\mathsf{sk}_{\mathrm{owner}}$ without access to the secret key.
In model~(R), this is exactly EUF-CMA forgery.
In model~(D), the adversary may additionally attempt passphrase guessing to
reconstruct derived owner-key material; this contributes the memory-hard
guessing term.
Thus successful tampering still requires either signature forgery or
breaking the passphrase/KDF boundary.
By EUF-CMA security (plus Argon2id one-wayness in model~(D)), success is
negligible, even given access to signatures on other messages
(e.g., previously observed update authorizations for the same
$\DID^*$, which are for different $(\mathtt{cid}, \mathtt{ver},
\Commit)$ tuples).
\end{proof}

\begin{remark}[Extended Instantiation: Option~A]
If a concrete verifier model supports secure validation of
HMAC-based authorization under $K_{\mathrm{reg}}$ without exposing secret
material (e.g., sound ZK validity proofs with binding statements), then a
parallel mapping-integrity bound can be derived from Argon2id one-wayness
and HMAC PRF security.
\end{remark}

\begin{theorem}[Rollback / Tamper Safety Under Fresh Finalized Reads]
\label{thm:roll}
If the registry verifier enforces strictly increasing version numbers
and the client verifies $H(\SA2) = \Commit$ upon recovery
\emph{while reading a fresh finalized registry state}, then for
any PPT adversary~$\mathcal{A}$ playing $\mathbf{G}_{\mathrm{roll}}$,
\[
  \exists\ \text{PPT reducer } \mathcal{B}\ \text{such that}
\]
\[
  \mathrm{Adv}^{\mathrm{roll}}_{\mathcal{A}}
  \;\leq\;
  \mathrm{Adv}^{\mathrm{cr}}_{H}
  \;+\;
  \mathrm{Adv}^{\mathrm{map}}_{\mathcal{B}}
  \;+\;
  \varepsilon_{\mathrm{fresh}}
  \;+\; \mathsf{negl}(\lambda),
\]
where $\varepsilon_{\mathrm{fresh}}$ is the probability that the client reads
an unfinalized or stale registry view despite the deployment's freshness
policy.
Tampering is detected except with negligible hash-collision probability,
and stale replay requires violating the fresh-finalized-read assumption or
forging authorization.
\end{theorem}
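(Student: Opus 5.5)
The proof is a case split on the event $\mathsf{Win}$ of $\mathbf{G}_{\mathrm{roll}}$, which has two branches. $\mathsf{Tamper}$ is the event that the client accepts some $\SA2'$ with $H(\SA2')\neq h_n$. $\mathsf{Stale}$ is the event that the client accepts an artifact for a version $j<n$. Before splitting, I condition on the event $\mathsf{Fresh}$ that the client's registry read returns the finalized current state, so that the record read is exactly $(c_n, n, h_n, \mathsf{pk}_{\mathrm{owner}}^*)$ or whatever record the registry currently holds. The complement $\neg\mathsf{Fresh}$ costs at most $\varepsilon_{\mathrm{fresh}}$ by definition. This gives
\[
\Pr[\mathsf{Win}]\le \varepsilon_{\mathrm{fresh}}+\Pr[\mathsf{Win}\wedge\mathsf{Fresh}],
\]
and I bound the two branches separately on $\mathsf{Fresh}$.

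\emph{Tamper branch.} Suppose $\mathsf{Fresh}$ holds and the registry record is the honest one, so $\Commit=h_n$. The client runs step~6 of Flow~2 and rejects unless $H(\SA2')=\Commit=h_n$. Read literally, $\mathsf{Tamper}$ therefore has probability zero. The substantive version of this branch is that the client accepts an $\SA2'\neq\SA2_n$ with $H(\SA2')=h_n$. I handle it with a reducer $\mathcal{C}$ against collision resistance. $\mathcal{C}$ simulates the whole game itself, since it can generate every honest key, and it outputs the pair $(\SA2_n,\SA2')$. This pair is a collision whenever the event occurs, which contributes $\mathrm{Adv}^{\mathrm{cr}}_H$.

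\emph{Stale branch.} Suppose $\mathsf{Fresh}$ holds and the client accepts the artifact of version $j<n$. One possibility is that the record still holds the honest version $n$. Then the fetched artifact must hash to $h_n$ while being $\SA2_j$ with $\SA2_j\neq\SA2_n$; fresh salts and nonces make these distinct except with negligible probability. That is again a collision and is folded into the previous term. The other possibility is that the registry's current record for $\DID^*$ differs from the honest $(c_n,n,h_n)$. Because the registry enforces strict version increase, it never accepts a re-submission of an old honest update $(c_j,j,h_j)$ once version $n$ is stored, so replaying old signed updates does not help. Hence some state-changing operation on $\DID^*$ that the challenger never issued was accepted. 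I would build $\mathcal{B}$ for $\mathbf{G}_{\mathrm{map}}$ as follows. $\mathcal{B}$ plays the honest registration and updates $1,\dots,n$ through the map challenger, which supplies authorization for $\DID^*$. It forwards $\mathcal{A}$'s registry operations to the challenger and simulates storage and the network itself. In this situation, $\mathcal{A}$'s accepted operation is a win for $\mathcal{B}$, which gives $\mathrm{Adv}^{\mathrm{map}}_{\mathcal{B}}$. Theorem~\ref{thm:map} then bounds this term further if desired.

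The main obstacle is making the reduction to $\mathbf{G}_{\mathrm{map}}$ faithful. $\mathbf{G}_{\mathrm{map}}$ only has the challenger register $\DID^*$ once, whereas here the honest party issues $n$ updates. I would interpret the map challenger as providing the honest updates as oracle-supplied authorizations; this matches the EUF-CMA signing-oracle flavour in the proof of Theorem~\ref{thm:map}. I would also argue that a replay of any honestly authorized tuple is rejected by monotonicity, so every accepted change is unauthorized. Residual events are absorbed into $\mathsf{negl}(\lambda)$: honest artifacts colliding, and scheduling races at finality boundaries that are not already covered by $\varepsilon_{\mathrm{fresh}}$. Summing the branches yields the stated bound.
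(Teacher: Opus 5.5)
Your proposal is correct and follows the paper's decomposition. Stale or unfinalized reads are charged to $\varepsilon_{\mathrm{fresh}}$, any non-honest record under a fresh read is turned into an unauthorized write on $\DID^*$ charged to $\mathrm{Adv}^{\mathrm{map}}_{\mathcal{B}}$ (with monotonicity ruling out replays of old honest updates), and a substituted artifact against an honest commitment becomes a collision charged to $\mathrm{Adv}^{\mathrm{cr}}_H$. You are more explicit than the paper's sketch, for example on the honest-record stale case as a collision between distinct $\SA2_j,\SA2_n$ and on the $n$-update mismatch with $\mathbf{G}_{\mathrm{map}}$; do split first on honest versus non-honest record, so that tampering under a fresh read of a forged record is visibly routed to that same map reduction.
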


\begin{proof}[Proof sketch]
\emph{Version rollback.}
The registry rejects any update with
$\mathtt{ver}_{\mathrm{new}} \leq \mathtt{ver}_{\mathrm{old}}$.
Thus the adversary cannot revert the mapping to an older version via
a legitimate registry write.
Presenting a stale record to the client requires either controlling
the client's registry view so that freshness/finality assumptions fail,
or forging authorization for a downgrade, which
contradicts Theorem~\ref{thm:map}.
These two failure events contribute
$\varepsilon_{\mathrm{fresh}}$ and
$\mathrm{Adv}^{\mathrm{map}}_{\mathcal{B}}$, respectively, where
$\mathcal{B}$ is the reduction that uses a rollback adversary to break
mapping integrity whenever an unauthorized downgrade write is accepted.

\emph{Artifact tampering.}
If $\mathcal{A}$ substitutes $\SA2' \neq \SA2$ while keeping the
same $\CID$ or $\Commit$, the client's commitment check
$H(\SA2') = \Commit$ fails unless $H(\SA2') = H(\SA2)$, which
requires a collision in~$H$.
\end{proof}

\begin{corollary}[Recoverability --- G1]
Since lookup keys are computable from $(I,P)$ alone
(Eq.~\eqref{eq:klookup}) and sealing keys are derivable after fetching
$\SA2$ from embedded $(\mathsf{salt}_{pw}, \mathsf{params}_{pw})$
(Table~\ref{tab:sa2}), any new device holding $(I, P)$ can recompute
$\DID$, fetch $\SA2$, verify $\Commit$, and decrypt, recovering $\REV$.
No device-bound secret is required, satisfying~G1.
\end{corollary}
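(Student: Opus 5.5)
The plan is to prove the corollary as a correctness (completeness) argument: walk through Flow~2 (Section~\ref{sec:flow-recover}) on a fresh device $D_2$ holding only $(I,P)$ and public infrastructure. At each step I will check that every input is either in $\{I,P\}$, public, or computed in an earlier step, and that every check passes for an honestly registered record. The structure is an induction over the steps of Flow~2, paired with the matching steps of Flow~1 (Section~\ref{sec:flow-reg}), so that each value $D_2$ recomputes equals the value $D_1$ published.

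\emph{Lookup phase.} First, $\mathsf{salt}_{\mathrm{lookup}}$ in Eq.~\eqref{eq:salt-lookup} depends only on the fixed label and $\Norm(I)$. Since $\Norm$ and $\mathsf{params}_{\mathrm{lookup}}$ are deterministic and specified (Section~\ref{sec:did}), $K_{\mathrm{lookup}}$ from Eq.~\eqref{eq:klookup} equals the value computed by $D_1$. Determinism of $\mathsf{HKDF}$ then gives the same $K_{\mathrm{idx}}$ via Eq.~\eqref{eq:kidx}, and determinism of $\mathsf{HMAC}$ with the public, client-verified $\mathsf{ctx}_{\mathrm{did}}$ (including the canonical $\mathtt{registry\_id}$) gives the same $\DID$ via Eq.~\eqref{eq:did}. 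Querying the canonical registry at $\DID$ returns the record $(\mathtt{cid},\mathtt{ver},\Commit,\dots)$. Here I will invoke the fresh finalized read assumption used in Theorem~\ref{thm:roll} to ensure this is the latest honest record. I will also invoke Theorem~\ref{thm:map} to ensure no unauthorized party altered it, except with the advantages bounded there.

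\emph{Fetch-and-decrypt phase.} Availability of content-addressed storage is needed to retrieve $\SA2$ from $\CID$. I will state this as an explicit liveness assumption, since the threat model allows refusal to pin content, and G1 speaks of "publicly available infrastructure". The honest artifact satisfies $H(\SA2)=\Commit$ by construction in Flow~1, so the commitment check passes. Per Table~\ref{tab:sa2}, $\SA2$ embeds $\mathsf{salt}_{pw}$, $\mathsf{params}_{pw}$, the nonce, and the aad. Hence Eq.~\eqref{eq:ksealroot} and Eq.~\eqref{eq:ksa} reproduce $K_{\mathrm{sa}}$ exactly, and AEAD correctness yields $\mathsf{AEAD.Dec}(K_{\mathrm{sa}},\mathsf{nonce},\mathsf{ct},\mathsf{aad})=\REV$. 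No step uses $\mathrm{PRF}_{D_1}$ or any device-bound key, which establishes G1 and also G5.

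The main obstacle is not the computation but stating the assumptions under which "any new device recovers" actually holds. These are deterministic normalization and context, storage availability, a fresh canonical registry view, and integrity of the record. I would handle them by conditioning the statement on an honest, available execution. I would also note that adversarial interference reduces to the events already bounded in Theorems~\ref{thm:map} and~\ref{thm:roll}, so recovery fails only with probability at most $\varepsilon_{\mathrm{fresh}}$ plus negligible terms, with DoS/unavailability excluded explicitly.
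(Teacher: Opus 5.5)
Your proposal is correct and follows the paper's own argument, which appears inline in the corollary statement: the chain $\mathsf{salt}_{\mathrm{lookup}}\to K_{\mathrm{lookup}}\to K_{\mathrm{idx}}\to\DID$ is a deterministic function of $(I,P)$ and public context, and $\SA2$ carries everything else needed to rederive $K_{\mathrm{sa}}$ and decrypt, so no device-bound secret enters Flow~2. Your explicit conditions (deterministic $\Norm$, $\mathsf{ctx}_{\mathrm{did}}$ and $\mathsf{params}_{\mathrm{lookup}}$; storage availability; a fresh canonical registry view) spell out what the paper leaves implicit, but Theorems~\ref{thm:map} and~\ref{thm:roll} bound \emph{undetected acceptance} of wrong data, not recovery failure, so detected payload tampering must fall under your DoS/unavailability exclusion rather than those bounds.
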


\section{Privacy Considerations}
\label{sec:privacy}

\subsection{Public Observability}
The registry is public.
Even with keyed discovery identifiers, a motivated attacker might:
\begin{itemize}[nosep,leftmargin=1.5em]
  \item Observe updates over time if they already know a user's $\DID$.
  \item Correlate update times or CIDs across networks.
\end{itemize}
Implementations should minimize metadata and avoid embedding identifying
information in $\SA2$'s associated data field.

\subsection{Identifier Normalization Leakage}
Normalization must be deterministic and must not embed additional semantics.
Raw identifiers must never be stored on-chain.
Only the keyed discovery identifier is published.

\subsection{Migration Pointer Linkability}
If an on-chain redirect pointer is used during passphrase rotation
(Flow~3, Case~2), the link between the old $\DID$ and the new $\DID'$
becomes publicly observable, allowing an observer to correlate the two
registry entries as belonging to the same user.
Implementations should weigh this privacy cost against the UX benefit of
migration hints; where unlinkability is required, the old entry should be
tombstoned without a forward pointer.

\subsection{Offline Guessing}
If an adversary obtains $\SA2$ ciphertext, they may attempt offline
password guessing against~$P$.
This risk is inherent to password-based recovery.
$\mathsf{Argon2id}$ parameters should be set to high memory cost
appropriate for mobile performance budgets.
Optional secondary secrets (e.g., a Secondary Passphrase) can strengthen
recovery.

\section{Availability Considerations}
\label{sec:avail}

\subsection{Decentralized Storage}
Content-addressed storage does not guarantee persistence.
For recoverability, $\SA2$ must be available:
\begin{itemize}[nosep,leftmargin=1.5em]
  \item Users may pin their own $\SA2$.
  \item Systems may support multiple storage backends (multiple
        CIDs/replicas).
  \item Wallets may allow optional third-party pinning providers,
        acknowledging this reintroduces a replaceable centralized
        component.
\end{itemize}

\subsection{Registry Availability}
If the registry is on an L2, outages or censorship can block recovery.
Clients should support reading from multiple RPC endpoints and may support
mirrored registries.

\section{Revocation Semantics}
\label{sec:revocation}

Updating $\DID \to \mathtt{cid}$ (or rotating to a new entry) achieves:
\begin{itemize}[nosep,leftmargin=1.5em]
  \item \textbf{Discovery revocation:} new devices will retrieve the latest
        $\SA2$, not the old one.
\end{itemize}
It does \emph{not} automatically achieve:
\begin{itemize}[nosep,leftmargin=1.5em]
  \item \textbf{On-chain key revocation:} previously derived keys may still
        control assets on existing addresses.
\end{itemize}
True key revocation requires account migration, smart-account key rotation,
or asset transfer to new derived addresses.

\section{Implementation Notes}
\label{sec:impl}

\paragraph{Parameter selection.}
The two-stage derivation (Section~\ref{sec:keys}) requires two
$\mathsf{Argon2id}$ evaluations per registration or recovery.
To bound total latency on mobile devices, $\mathsf{params}_{\mathrm{lookup}}$
may be set lighter than $\mathsf{params}_{pw}$: Stage~A protects only
enumeration resistance (G2), whereas Stage~B directly guards $\REV$.
A weaker Stage~A is acceptable provided the combined cost still exceeds the
per-guess budget an attacker is willing to spend.
Concrete parameter sets may follow RFC~9106~\cite{rfc9106} or platform-specific guidance.
As a deployment baseline, Stage~A should not be configured below
approximately $64$~MiB memory and should target at least
$80$--$150$~ms on the target client class; Stage~B should be set at least as
strong as Stage~A (typically stronger).
Operationally, deployments should publish a target offline-guessing cost
floor (time and memory per guess) so security claims are auditable.
AEAD nonces must be unique per key; consider storing a random nonce
in~$\SA2$.
All sensitive values ($\REV$, derived keys) must be zeroized after use.

\paragraph{Passkey PRF usage.}
PRF output should never be exported.
Use it only to derive local sealing keys.
Local artifacts should be stored in the OS secure enclave/keychain as
appropriate.

\paragraph{DoS controls.}
Because there is no centralized server, DoS mitigations are limited.
Clients may cache successful registry lookups, use multiple
Arweave gateways, and support multiple redundant CIDs.

\paragraph{Canonical registry binding.}
No mechanism can prevent third parties from deploying parallel registries
or reusing public metadata such as $\mathtt{app\_id}$.
Deployments should therefore publish a signed configuration (or client
hardcoded allowlist) that binds the canonical
$(\mathtt{app\_id}, \mathtt{registry\_id})$, where
$\mathtt{registry\_id}$ includes at least
$(\mathtt{chain\_id}, \mathtt{contract\_address})$.
Clients should reject recovery/registration against registries outside this
authenticated binding.

\paragraph{PQC readiness and migration.}
Because \VADAR is layered over ACE-GF, it inherits algorithm agility at the
identity/key-derivation layer via explicit context separation (including
algorithm identifiers). In practical terms, classical and post-quantum key
domains can coexist under the same identity root without re-enrollment.
For full end-to-end PQ deployment, the registry-authorization layer must also
be instantiated with PQ-secure signatures in Option~B; this paper's security
analysis remains generic in terms of signature EUF-CMA assumptions.

\section{Complexity and Cost}
\label{sec:cost}

\paragraph{Client cost.}
Dominated by two memory-hard derivations ($\mathsf{Argon2id}$, one per
stage) and a single AEAD operation.
All other operations (HKDF, HMAC, signature) are negligible in
comparison.

\paragraph{Registry cost.}
Constant-size key lookup and a single signature verification (Option~B)
per mutation.
For Option~A, verification cost is verifier-model dependent:
off-chain verifiers perform keyed checks directly, while on-chain
deployments typically require auxiliary mechanisms (e.g., commit-reveal
or ZK validity proofs), so no single ``one HMAC check'' on-chain cost
claim is assumed.
Storage is $O(1)$ per user.

\paragraph{Data availability.}
Reads are public and vendor-agnostic; write relayers are optional
convenience components that do not affect security.

\section{Limitations}
\label{sec:limitations}

\begin{itemize}[nosep,leftmargin=1.5em]
  \item Weak passphrases degrade G2 and offline guessing resistance;
        policy must enforce entropy and KDF cost.
  \item Lost passphrase remains catastrophic for purely passphrase-based
        portability.
  \item Availability depends on registry and storage reachability;
        multi-endpoint fallback is required.
  \item If the owner's signing key is compromised (Option~B), rollback
        resistance fails; this is inherent.
  \item In model~(D), mapping integrity and enumeration resistance are
        both bounded by passphrase-guessing hardness; a single passphrase
        compromise defeats both goals simultaneously.
\end{itemize}

\section{Future Work}
\label{sec:future}

\begin{itemize}[nosep,leftmargin=1.5em]
  \item \textbf{On-chain authorization without revealing secrets:}
    ZK proofs for passphrase-derived authorization (Option~A).
  \item \textbf{Anti-correlation mechanisms:}
    Private information retrieval (PIR) or OPRF-based registry access.
  \item \textbf{True key revocation:}
    Integration with smart accounts for rotation and social recovery.
  \item \textbf{Multi-identifier support:}
    Mapping phone/email aliases with privacy-preserving linkage.
\end{itemize}

\section{Conclusion}
\label{sec:conclusion}

\VADAR enables practical decentralized recovery while making
public-directory discovery keyed and passphrase-gated.
The protocol separates three artifact roles---local (PRF-sealed), backup
(password-sealed), and recovered-local---so that device-bound security
enhances daily use without creating vendor lock-in for recovery.
Domain-separated key derivation, keyed discovery identifiers, and
commitment-bound versioned records provide clear cryptographic hooks for
enumeration resistance, mapping integrity, and rollback protection.
Under standard assumptions, \VADAR meets its security goals while
remaining vendor-agnostic and chain-agnostic.
It also preserves a clear migration path to end-to-end post-quantum
deployment when registry authorization (Option~B) is instantiated with
PQ-secure signatures.

\section*{Acknowledgments}
This document builds on the ACE-GF artifact model and extends it with a
practical decentralized discovery and recovery layer.


\end{document}